\newtheorem{theorem}{Theorem}
\theoremstyle{plain}
\newtheorem{lemma}{Lemma}
\numberwithin{equation}{section}
\begin{document}
	
\title[Global Stability for a HIV/AIDS Model]{Global Stability for a HIV/AIDS Model}

\author{Cristiana J. Silva}
\address{Cristiana J. Silva: 
Center for Research and Development in Mathematics and Applications (CIDMA),
Department of Mathematics, University of Aveiro, 3810-193 Aveiro, Portugal.}
\email{cjoaosilva@ua.pt}

\author{Delfim F. M. Torres}
\address{Delfim F. M. Torres: 
Center for Research and Development in Mathematics and Applications (CIDMA),
Department of Mathematics, University of Aveiro, 3810-193 Aveiro, Portugal.}
\email{delfim@ua.pt}


\date{Received: September 08, 2016, Revised: April 15, 2017; Accepted: April 19, 2017.}

\subjclass[2010]{Primary 34D23, 34C60; Secondary 92D30}

\keywords{HIV/AIDS mathematical model, global stability, Lyapunov functions.}

	
\begin{abstract}
We investigate global stability properties 
of a HIV/AIDS population model with constant recruitment rate, 
mass action incidence, and variable population size. Existence 
and uniqueness results for disease-free and endemic equilibrium 
points are proved. Global stability of the equilibria 
is obtained through Lyapunov's direct method 
and LaSalle's invariance principle. 
\end{abstract}
	

\maketitle

	
\section{Introduction}

Mathematical models may represent a useful tool in the development 
of public health policies \cite{Denysiuk:2015,MR3349757,MR3148910}. 
Although it is unlikely that a mathematical model will provide 
accurate long-term predictions on the number of AIDS cases, 
one such model, based on interactions that lead to disease transmission, 
could eventually allow researchers to answer many useful questions 
\cite{Hyman:Stanley:MathBio:1988}. As a result, several mathematical 
models have been proposed in the last decades for HIV/AIDS 
transmission dynamics: see, e.g.,
\cite{Abbas:etall:PlosOne2:2007,Anderson:1988,Anderson:etall:1986,Blower:etall:PhilTrans:1991,%
Hethcote:VanArk:1992,Mukandavire:Garirar:2007,Nyabadza:etall:NARWA:2011,Sani:etall:MathBiosc:2007} 
and references cited therein. 

Global stability of equilibrium points for mathematical models of HIV/AIDS 
transmission dynamics has been studied by different authors: see, e.g., \cite{Cai:etall:JCAM:2009,Feng:Rui:Yang:AMM:2016,Huo:Feng:AMM:2013}. 
In \cite{Huo:Feng:AMM:2013}, the authors consider different latent 
stages depending on other chronic diseases that each individual may have.
The epidemic model in \cite{LiYangaZhoub:NARWA:2011} considers a latent stage 
and vaccination of newborns and susceptible. In \cite{NareshEtall:AMC:2006}, 
it is assumed that the HIV epidemic spreads both through horizontal 
and vertical transmission; in \cite{NareshEtall:MCM:2009}, the immigration 
of infective individuals is considered, both models with a variable 
size population. The effect of screening unaware infective individuals 
on the spread of HIV, in a constant population, is considered 
in the mathematical model proposed in \cite{TripathiEtall:AMC:2007}. 
In \cite{Cai:etall:JCAM:2009}, the global stability is studied 
for a HIV/AIDS model with two infective stages and where a discrete time
delay is introduced, describing the time from start of treatment 
in the symptomatic stage until treatment effects become visible. 

Motivated by the results of \cite{SilvaTorres:TBHIV:2015},
in this paper we propose a mathematical model for HIV/AIDS transmission 
with varying population size in a homogeneously mixing population. 
Differently from \cite{SilvaTorres:TBHIV:2015}, here we consider 
a mass action hypothesis for the transmission rate. We assume 
that the rate at which susceptible are infected by individuals 
with AIDS symptoms is bigger or equal than the rate of infection 
by contact with HIV-infected individuals (pre-AIDS). This is justifiable
because individuals with AIDS symptoms have a higher viral load 
and it is known that there exists a positive correlation between 
viral load and infectiousness \cite{art:viral:load}. On the other hand, 
individuals with HIV-infection under anti-retroviral treatment (ART) 
suffer a partial restoration of the immune function and, therefore, 
we assume that the rate of infection by contact with individuals 
under ART is smaller or equal than the rate of infection by contact 
with HIV-infected individuals (pre-AIDS), which are not under ART 
(see, e.g., \cite{AIDS:chronic:Lancet:2013}). We prove the global 
stability of the disease free equilibrium whenever the basic 
reproduction number $R_0$ is less than one; and the global stability 
of the unique endemic equilibrium when $R_0$ is greater than one. 
The global stability analysis is done through Lyapunov's direct 
method combined with LaSalle's invariance principle.  

The paper is organized as follows. In Section~\ref{sec:model}, 
we describe the mathematical model for HIV/AIDS transmission. 
Then, in Section~\ref{sec:dfe}, we prove existence and global stability 
of the disease free equilibrium. The existence and global stability 
of the unique endemic equilibrium point is proved in Section~\ref{sec:ee}. 
The stability results are then illustrated through numerical simulations 
in Section~\ref{sec:numsimu}. We finish the paper with Section~\ref{sec:conc} 
of concluding remarks.  


\section{Model for HIV/AIDS transmission}
\label{sec:model}

In this paper, we propose and analyze a mathematical model 
for HIV/AIDS transmission with varying population size 
in a homogeneously mixing population. 
The model is based on that of \cite{SilvaTorres:TBHIV:2015}, 
and subdivides the human population into four mutually-exclusive 
compartments: susceptible individuals ($S$); 
HIV-infected individuals with no clinical symptoms of AIDS 
(the virus is living or developing in the individuals 
but without producing symptoms or only mild ones) 
but able to transmit HIV to other individuals ($I$); 
HIV-infected individuals under ART treatment (the so called 
chronic stage) with a viral load remaining low ($C$); 
and HIV-infected individuals with AIDS clinical symptoms ($A$).
The total population at time $t$, denoted by $N(t)$, is given by
\begin{equation*}
N(t) = S(t) + I(t) + C(t) + A(t).
\end{equation*} 
The effective contact with people infected with HIV is
at a rate $\lambda$, given by
\begin{equation*}
\lambda = \beta \left( I + \eta_C \, C  + \eta_A  A \right),
\end{equation*}
where $\beta$ is the contact rate for HIV transmission.
The modification parameter $\eta_A \geq 1$ accounts for the relative
infectiousness of individuals with AIDS symptoms, in comparison to those
infected with HIV and no AIDS symptoms. Individuals with AIDS symptoms
are more infectious than HIV-infected individuals (pre-AIDS) because
they have a higher viral load and there is a positive correlation
between viral load and infectiousness \cite{art:viral:load}. 
On the other hand, $\eta_C \leq 1$ translates the partial restoration 
of the immune function of individuals with HIV infection 
that are correctly treated under ART \cite{AIDS:chronic:Lancet:2013}.

We assume that HIV-infected individuals 
with and without AIDS symptoms have access to ART treatment. 
HIV-infected individuals with no AIDS symptoms, $I$, progress to the class 
of individuals with HIV infection under ART treatment, $C$, at a rate $\phi$, 
and HIV-infected individuals with AIDS symptoms are treated for HIV at a rate $\alpha$.
We assume that HIV-infected individuals with AIDS symptoms, $A$, 
that start treatment, move to the class of HIV-infected individuals, $I$, 
and will move to the chronic class, $C$, only if the treatment is maintained. 
HIV-infected individuals with no AIDS symptoms, $I$, that do not take 
ART treatment, progress to the AIDS class, $A$, at rate $\rho$. We assume 
that only HIV-infected individuals with AIDS symptoms, $A$, 
suffer from an AIDS induced death, at a rate $d$. These assumptions 
are translated into the following mathematical model:
\begin{equation}
\label{eq:model:1}
\begin{cases}
\dot{S}(t) = \Lambda - \beta \left( I(t) + \eta_C \, C(t)  
+ \eta_A  A(t) \right) S(t) - \mu S(t),\\[0.2 cm]
\dot{I}(t) = \beta \left( I(t) + \eta_C \, C(t)  
+ \eta_A  A(t) \right) S(t) - \left(\rho + \phi + \mu\right) I(t) 
+ \omega C(t) + \alpha A(t), \\[0.2 cm]
\dot{C}(t) = \phi I(t) - (\omega + \mu)C(t),\\[0.2 cm]
\dot{A}(t) =  \rho \, I(t) - (\alpha + \mu + d) A(t).
\end{cases}
\end{equation}
From $N(t) = S(t) + I(t) + C(t) + A(t)$ and \eqref{eq:model:1}, 
it follows that 
$$
\dot{N}(t) = \Lambda - \mu N(t) - d A(t).
$$
Thus, the total population size $N$ may vary in time. 
Let $\Omega$ denote the biologically feasible region
\begin{equation*}
\Omega = \left\{ \left( S, I, C, A \right) \in \mathbb{R}_+^{4} \, 
: \, N \leq \Lambda/\mu \right\}.
\end{equation*}
Using a standard comparison theorem (see \cite{Lakshmikantham:1989}), 
one can easily show that $N(t) \leq \frac{\Lambda}{\mu}$ 
if $N(0) \leq \frac{\Lambda}{\mu}$. Thus, the region $\Omega$ 
is positively invariant. Hence, it is sufficient to consider
the dynamics of the flow generated by \eqref{eq:model:1} in $\Omega$.
In this region, the model is epidemiologically and mathematically 
well posed in the sense of \cite{Hethcote:2000}. In other words,
every solution of the model \eqref{eq:model:1} with initial conditions
in $\Omega$ remains in $\Omega$ for all $t > 0$. Therefore, 
the dynamics of our model will be considered in $\Omega$.


\section{Existence and global stability of the DFE}
\label{sec:dfe}

Model \eqref{eq:model:1} has a disease-free equilibrium (DFE) given by
\begin{equation}
\label{eq:DFE}
\Sigma_0 = \left( S^0, I^0, C^0, A^0  \right) 
= \left(\frac{\Lambda}{\mu},0, 0,0  \right).
\end{equation}
Following \cite{van:den:Driessche:2002}, the basic reproduction number 
$R_{0}$ for \eqref{eq:model:1}, which represents the expected 
average number of new HIV infections produced by a single HIV-infected 
individual when in contact with a completely susceptible population, is given by
\begin{equation*}
R_0 = \frac{ S^0 \beta\, \left(  \xi_2  \left( \xi_1 +\rho\, \eta_A \right) 
+ \eta_C \,\phi \, \xi_1 \right) }{\mu\, \left(  \xi_2  \left( \rho + \xi_1\right) 
+\phi\, \xi_1 +\rho\,d \right) +\rho\, \omega\,d} 
= \frac{S^0 \mathcal{N}}{\mathcal{D}},
\end{equation*}
where $\xi_1 = \alpha + \mu + d$, 
$\xi_2 = \omega + \mu$, 
$\mathcal{N} = \beta\, \left(  \xi_2  \left( \xi_1 +\rho\, \eta_A \right) 
+ \eta_C \,\phi \, \xi_1 \right)$, and 
$$
\mathcal{D} = \mu\, \left(  \xi_2  \left( \rho + \xi_1\right) 
+\phi\, \xi_1 +\rho\,d \right) +\rho\, \omega\,d.
$$
The following local stability result follows easily 
from Theorem~2 of \cite{van:den:Driessche:2002}.

\begin{lemma}
The disease free equilibrium $\Sigma_0$ is locally asymptotically 
stable if $R_0 < 1$ and unstable if $R_0 > 1$.
\end{lemma}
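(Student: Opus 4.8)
The plan is to apply the next-generation matrix method of \cite{van:den:Driessche:2002}. Reordering the compartments as $(I,C,A,S)$, with the first three regarded as the infected classes, I would split the right-hand side of \eqref{eq:model:1} restricted to these compartments as $\mathcal{F}_i-\mathcal{V}_i$, where
\[
\mathcal{F}=\begin{pmatrix}\beta\left(I+\eta_C\,C+\eta_A\,A\right)S\\ 0\\ 0\end{pmatrix},\qquad
\mathcal{V}=\begin{pmatrix}(\rho+\phi+\mu)I-\omega C-\alpha A\\ -\phi I+(\omega+\mu)C\\ -\rho I+(\alpha+\mu+d)A\end{pmatrix}.
\]
Here $\mathcal{F}$ collects the rate of appearance of new infections and $\mathcal{V}$ all remaining transitions. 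The one delicate point is that the terms $\omega C$ and $\alpha A$ in $\dot I$ must be placed in $\mathcal{V}$ (as transfer into $I$ from the treated and AIDS classes), not in $\mathcal{F}$, since they are not new infections; the opposite choice would alter the value of $R_0$.

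Next I would verify the structural hypotheses (A1)--(A5) of \cite{van:den:Driessche:2002}: nonnegativity of $\mathcal{F}$, $\mathcal{V}^{+}$ and $\mathcal{V}^{-}$ on $\mathbb{R}_+^{4}$; vanishing of $\mathcal{V}_i^{-}$ whenever the $i$-th coordinate is zero; support of $\mathcal{F}$ contained in the infected compartments; invariance of the disease-free subspace under the flow with $\mathcal{F}\equiv 0$; and asymptotic stability of the linearization restricted to that subspace, which here is just $\dot S=\Lambda-\mu S$, with eigenvalue $-\mu<0$. All of these are immediate from the sign structure of \eqref{eq:model:1} together with the positive invariance of $\Omega$ established in Section~\ref{sec:model}.

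Then I would compute the linearizations of $\mathcal{F}$ and $\mathcal{V}$ with respect to $(I,C,A)$ at the DFE $\Sigma_0$,
\[
F=\begin{pmatrix}\beta S^0 & \beta\eta_C S^0 & \beta\eta_A S^0\\ 0 & 0 & 0\\ 0 & 0 & 0\end{pmatrix},\qquad
V=\begin{pmatrix}\rho+\phi+\mu & -\omega & -\alpha\\ -\phi & \xi_2 & 0\\ -\rho & 0 & \xi_1\end{pmatrix},
\]
with $\xi_1=\alpha+\mu+d$ and $\xi_2=\omega+\mu$. One checks that $V$ is a nonsingular $M$-matrix and, after using $\xi_2-\omega=\mu$ and $\xi_1-\alpha=\mu+d$, that $\det V=\mathcal{D}$. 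Since $F$ has rank one, so does $FV^{-1}$, whence $R_0=\rho(FV^{-1})=\operatorname{tr}(FV^{-1})=(FV^{-1})_{11}$. Expanding the first column of $V^{-1}$ via the relevant cofactors of $V$, namely $\xi_1\xi_2$, $\phi\xi_1$, $\rho\xi_2$, gives $(FV^{-1})_{11}=\beta S^0\left(\xi_1\xi_2+\eta_C\phi\xi_1+\eta_A\rho\xi_2\right)/\mathcal{D}=S^0\mathcal{N}/\mathcal{D}$, which is exactly the expression for $R_0$ stated before the lemma. Theorem~2 of \cite{van:den:Driessche:2002} then applies verbatim, yielding local asymptotic stability of $\Sigma_0$ when $R_0<1$ and instability when $R_0>1$.

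I expect the main work to be purely computational: confirming the two algebraic identities $\det V=\mathcal{D}$ and $(FV^{-1})_{11}=S^0\mathcal{N}/\mathcal{D}$, whose cancellations hinge on the relations $\xi_2-\omega=\mu$ and $\xi_1-\alpha=\mu+d$. Checking (A1)--(A5) is routine and needs no estimates beyond the sign pattern of \eqref{eq:model:1} and the invariance of $\Omega$.
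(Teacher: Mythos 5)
Your proposal is correct and follows essentially the same route as the paper: the paper computes $R_0$ by the next-generation matrix method of \cite{van:den:Driessche:2002} and then states that the lemma follows from Theorem~2 of that reference, which is exactly the $F$, $V$ decomposition and spectral-radius computation you carry out (your identities $\det V=\mathcal{D}$ and $(FV^{-1})_{11}=S^0\mathcal{N}/\mathcal{D}$ check out). Your write-up merely makes explicit the verification of hypotheses (A1)--(A5) and the cofactor computation that the paper leaves implicit.
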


Now we prove the global stability of the disease free equilibrium \eqref{eq:DFE}. 

\begin{theorem}
\label{theo:global:stab:dfe}
The disease free equilibrium $\Sigma_0$ is globally 
asymptotically stable for $R_0 < 1$. 
\end{theorem}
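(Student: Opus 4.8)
The plan is to construct a suitable Lyapunov function on the invariant region $\Omega$ and combine Lyapunov's direct method with LaSalle's invariance principle. Since the disease-free dynamics are driven by the infected compartments $I$, $C$, $A$, I would look for a Lyapunov function of the form $V = a_1 I + a_2 C + a_3 A$ with positive constants $a_1, a_2, a_3$ to be determined. This is the standard approach based on the next-generation matrix structure: the coefficients $a_1, a_2, a_3$ should be chosen proportional to the components of the left eigenvector of the matrix $FV^{-1}$ (in the notation of \cite{van:den:Driessche:2002}) associated with the spectral radius $R_0$, so that the $S$-dependence collapses cleanly and the derivative along trajectories factors through $(R_0 - 1)$.

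\medskip

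First I would compute $\dot V$ along solutions of \eqref{eq:model:1}:
\begin{equation*}
\dot V = a_1\Bigl[\beta(I+\eta_C C+\eta_A A)S - (\rho+\phi+\mu)I + \omega C + \alpha A\Bigr]
+ a_2\bigl[\phi I - \xi_2 C\bigr] + a_3\bigl[\rho I - \xi_1 A\bigr].
\end{equation*}
Using $S \le S^0 = \Lambda/\mu$ on $\Omega$, I would bound the transmission term by $a_1\beta(I+\eta_C C+\eta_A A)S^0$, and then group the coefficients of $I$, $C$, and $A$ separately. The key algebraic step is to choose $a_1, a_2, a_3 > 0$ so that the coefficients of $C$ and $A$ in the resulting expression vanish, which yields $a_2$ and $a_3$ as explicit positive multiples of $a_1$ (involving $\xi_1, \xi_2, \omega, \alpha, \rho, \eta_C, \eta_A$). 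With that choice, $\dot V$ reduces to a single term proportional to $I$, and I expect the bracket multiplying $I$ to be exactly $c\,(R_0 - 1)$ for some positive constant $c$ built from $\mathcal N$, $\mathcal D$, and the $\xi_i$; hence $\dot V \le c\,(R_0-1)\,I \le 0$ whenever $R_0 \le 1$, with equality forcing $R_0 = 1$ or $I = 0$.

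\medskip

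Then I would invoke LaSalle's invariance principle: for $R_0 < 1$ the largest invariant set contained in $\{\dot V = 0\} = \{I = 0\}$ must, by substitution back into \eqref{eq:model:1} (from $\dot C = \phi I - \xi_2 C$ and $\dot A = \rho I - \xi_1 A$ one gets $C \to 0$, $A \to 0$, and then $\dot S = \Lambda - \mu S$ forces $S \to \Lambda/\mu$), reduce to the single point $\Sigma_0$. Combined with the local asymptotic stability already established in the preceding lemma and the fact that $\Omega$ is positively invariant and bounded, this gives global asymptotic stability of $\Sigma_0$ in $\Omega$.

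\medskip

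The main obstacle I anticipate is purely algebraic rather than conceptual: verifying that the particular positive weights $a_1, a_2, a_3$ that annihilate the $C$- and $A$-coefficients simultaneously also produce a coefficient of $I$ that is a clean negative multiple of $(1-R_0)$. This requires the denominator $\mathcal D$ and numerator $\mathcal N$ of $R_0$ to emerge from the elimination, so one has to carry the bookkeeping carefully — in particular making sure no sign is lost and that all $a_i$ are genuinely positive under the standing assumptions $\eta_A \ge 1$, $\eta_C \le 1$, and positivity of all rates. An alternative, if the linear $V$ turns out to be awkward, would be to use a Volterra-type function $V = a_1 g(S/S^0) S^0 + \dots$ with $g(x)=x-1-\ln x$, but for a mass-action model with a DFE the linear combination of infected classes should suffice.
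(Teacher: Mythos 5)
Your proposal is correct and follows essentially the same route as the paper: a linear Lyapunov function $V=a_1 I+a_2 C+a_3 A$ with positive weights tied to the next-generation structure, the bound $S\le S^0$ on $\Omega$, a derivative estimate that factors through $(R_0-1)$, and LaSalle's invariance principle. The algebra you defer does close: choosing $a_2=a_1(\beta\eta_C S^0+\omega)/\xi_2$ and $a_3=a_1(\beta\eta_A S^0+\alpha)/\xi_1$ makes the $C$- and $A$-coefficients vanish and leaves the $I$-coefficient $\tfrac{a_1}{\xi_1\xi_2}\left(S^0\mathcal{N}-\mathcal{D}\right)=\tfrac{a_1\mathcal{D}}{\xi_1\xi_2}(R_0-1)$, whereas the paper's slightly different weights give $\dot V\le \mathcal{D}(R_0-1)\left(I+\eta_C C+\eta_A A\right)$, which is only a cosmetic difference.
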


\begin{proof}
Let $\xi_3 = \rho + \phi + \mu$. 	
Consider the following Lyapunov function:
\begin{equation*}
\begin{split}
V &= \left( \xi_1 \xi_2 + \xi_1 \phi \eta_C + \xi_2 \rho \eta_A \right) I 
+ \left( \xi_1 \omega + \xi_1 \xi_3 \eta_C + \rho \eta_A \omega 
- \eta_C \rho \alpha \right) C \\
&\quad + \left( \alpha \xi_2 + \xi_2 \xi_3 \eta_A 
+ \phi \eta_C \alpha - \phi \eta_A \omega \right) A.
\end{split}
\end{equation*}
Note that $\xi_1 \omega + \xi_1 \xi_3 \eta_C + \rho \eta_A \omega 
- \eta_C \rho \alpha = \xi_1 \omega + \alpha (\phi + \mu)\eta_C 
+ (\mu + d) \xi_3 \eta_C + \rho \eta_A \omega > 0$ and 
$\alpha \xi_2 + \xi_2 \xi_3 \eta_A + \phi \eta_C \alpha 
- \phi \eta_A \omega = \alpha \xi_2 + \omega (\rho + \mu)\eta_A 
+ \mu \xi_3 \eta_A + \phi \eta_C \alpha > 0$. 
The time derivative of $V$ computed along the solutions 
of \eqref{eq:model:1} is given by
\begin{equation*}
\begin{split}
\dot{V} &= \left( \xi_1 \xi_2 + \xi_1 \phi \eta_C 
+ \xi_2 \rho \eta_A \right) \dot{I} + \left( \xi_1 \omega 
+ \xi_1 \xi_3 \eta_C + \rho \eta_A \omega 
- \eta_C \rho \alpha \right) \dot{C}\\
&\quad + \left( \alpha \xi_2 + \xi_2 \xi_3 \eta_A 
+ \phi \eta_C \alpha - \phi \eta_A \omega \right) \dot{A}\\
&= \left( \xi_1 \xi_2 + \xi_1 \phi \eta_C + \xi_2 \rho \eta_A \right) 
\left( \beta \left( I + \eta_C \, C  + \eta_A  A \right) S 
- \xi_3 I + \alpha A + \omega C \right)\\
& \quad + \left( \xi_1 \omega + \xi_1 \xi_3 \eta_C 
+ \rho \eta_A \omega - \eta_C \rho \alpha \right) 
\left( \phi I - \xi_2 C \right) \\
&\quad + \left( \alpha \xi_2 + \xi_2 \xi_3 \eta_A 
+ \phi \eta_C \alpha - \phi \eta_A \omega \right) 
\left( \rho \, I - \xi_1 A \right),
\end{split}
\end{equation*}
which can be further simplified to 
\begin{equation*}
\begin{split}
\dot{V} &= (\xi_1 \xi_2 \beta +  \xi_1 \phi \eta_C \beta  
+ \xi_2 \rho \eta_A \beta) I S + (- \xi_1 \xi_2 \xi_3 
+ \xi_1 \omega \phi + \alpha \xi_2 \rho ) I\\
& \quad + \eta_C (\xi_1 \xi_2 \beta + \xi_1 \phi \eta_C \beta 
+ \xi_2 \rho \eta_A \beta ) C S + \eta_C (- \xi_1 \xi_3 \xi_2 
+ \xi_1 \phi \omega +  \rho \alpha \xi_2) C\\
& \quad + \eta_A (\xi_1 \xi_2 \beta + \xi_1 \phi \eta_C \beta 
+ \xi_2 \rho \eta_A \beta ) A S + \eta_A (- \xi_2 \xi_3 \xi_1 
+ \phi \omega \xi_1 + \xi_2 \rho \alpha ) A.
\end{split}
\end{equation*}
As $S \leq S^0$, the following inequality holds:
\begin{equation*}
\begin{split}
\dot{V} &\leq  (\xi_1 \xi_2 \beta +  \xi_1 \phi \eta_C \beta  
+ \xi_2 \rho \eta_A \beta) I S^0 + (- \xi_1 \xi_2 \xi_3 
+ \xi_1 \omega \phi + \alpha \xi_2 \rho ) I\\
& \quad + \eta_C (\xi_1 \xi_2 \beta + \xi_1 \phi \eta_C \beta 
+ \xi_2 \rho \eta_A \beta ) C S^0 + \eta_C 
\left(- \xi_1 \xi_3 \xi_2 + \xi_1 \phi \omega +  \rho \alpha \xi_2\right) C\\
& \quad + \eta_A (\xi_1 \xi_2 \beta + \xi_1 \phi \eta_C \beta 
+ \xi_2 \rho \eta_A \beta ) A S^0 + \eta_A 
\left(- \xi_2 \xi_3 \xi_1 + \phi \omega \xi_1 + \xi_2 \rho \alpha\right) A.
\end{split}
\end{equation*}
From  $S^0 \left(\xi_1 \xi_2 \beta +  \xi_1 \phi \eta_C \beta  
+ \xi_2 \rho \eta_A \beta\right) = \mathcal{N}$ and 
$- \xi_1 \xi_2 \xi_3 + \xi_1 \omega \phi + \alpha \xi_2 \rho = - \mathcal{D}$, 
\begin{equation*}
\begin{split}
\dot{V} &\leq \mathcal{N} I - \mathcal{D} I  + \eta_C (\mathcal{N} C 
- \mathcal{D} C )+ \eta_A (\mathcal{N} A - \mathcal{D} A )\\
&= \mathcal{D} I \left( R_0 - 1 \right)  + \eta_C \mathcal{D} 
C \left( R_0 - 1 \right) + \eta_A \mathcal{D} A \left( R_0 - 1 \right)\\
&\leq 0 \, \, \text{for} \, \, R_0 < 1.
\end{split}
\end{equation*}
Because all model parameters are nonnegative, it follows that 
$\dot{V} \leq 0$, for $R_0 < 1$ with $\dot{V}=0$, if and only if 
$I =C=A=0$. Substituting $(I, C, A) = (0, 0, 0)$ into \eqref{eq:model:1} 
shows that $S \to S^0=\frac{\Lambda}{\mu}$ as $t \to \infty$. Hence, 
$V$ is a Lyapunov function on $\Omega$ and the largest compact invariant 
set in $\{ (S, I, C, A) \in \Omega \, : \, \dot{V} =0 \}$ 
is the singleton $\{ \Sigma_0\}$. Thus, by LaSalle's invariance principle \cite{LaSalle1976}, 
every solution of \eqref{eq:model:1}, with initial conditions in $\Omega$, 
approaches $\Sigma_0$ as $t \to \infty$, whenever $R_0 < 1$. 	
\end{proof}


\section{Existence and global stability of the endemic equilibrium}
\label{sec:ee}

It is easy to show that model \eqref{eq:model:1} 
has a unique endemic equilibrium 
$$
\Sigma_+ =(S^*, I^*, C^*, A^*)
$$ 
whenever $R_0 > 1$. This is precisely 
stated in Lemma~\ref{lem:uni:ee}.  

\begin{lemma}
\label{lem:uni:ee}
The model \eqref{eq:model:1} has a unique endemic equilibrium 
$\Sigma_+ =(S^*, I^*, C^*, A^*)$ whenever $R_0 > 1$, 
which is given by
\begin{equation*}
S^* = \frac{ \mathcal{D}}{ \mathcal{N}}, \quad 
I^* = \frac{\xi_1 \xi_2 (\Lambda \mathcal{N} 
- \mu \mathcal{D})}{\mathcal{D} \mathcal{N}}, \quad
C^* = \frac{\phi \xi_1 (\Lambda \mathcal{N} 
-\mu \mathcal{D})}{\mathcal{D} \mathcal{N}}, \quad
A^* = \frac{\rho \xi_2 (\Lambda \mathcal{N} 
- \mu \mathcal{D})}{\mathcal{D} \mathcal{N}}.
\end{equation*}	
\end{lemma}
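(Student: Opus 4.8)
The plan is to solve the equilibrium system directly by back-substitution, exploiting the triangular structure that the last two equations provide. Setting all derivatives in \eqref{eq:model:1} to zero, the third and fourth equations immediately give $C^* = \frac{\phi}{\xi_2} I^*$ and $A^* = \frac{\rho}{\xi_1} I^*$, so every coordinate is determined by $S^*$ and $I^*$. First I would substitute these relations into the $\dot I = 0$ equation. Writing $\lambda^* = \beta(I^* + \eta_C C^* + \eta_A A^*) = \frac{\beta}{\xi_1\xi_2}\bigl(\xi_1\xi_2 + \eta_C\phi\xi_1 + \eta_A\rho\xi_2\bigr) I^* = \frac{\mathcal N}{\xi_1\xi_2} I^*$, the infection term becomes $\lambda^* S^*$, and the equation $\lambda^* S^* - \xi_3 I^* + \omega C^* + \alpha A^* = 0$ collapses, after clearing denominators by $\xi_1\xi_2$, to $\mathcal N S^* I^* = \bigl(\xi_1\xi_2\xi_3 - \xi_1\omega\phi - \alpha\xi_2\rho\bigr) I^* = \mathcal D\, I^*$, using exactly the identity $-\xi_1\xi_2\xi_3 + \xi_1\omega\phi + \alpha\xi_2\rho = -\mathcal D$ already established in the proof of Theorem~\ref{theo:global:stab:dfe}. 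Thus either $I^* = 0$ (the DFE) or $S^* = \mathcal D/\mathcal N$.

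Next I would recover $I^*$ from the $\dot S = 0$ equation, $\Lambda = \lambda^* S^* + \mu S^* = \mathcal D\, I^*/(\xi_1\xi_2) + \mu\mathcal D/\mathcal N$, where I used $\lambda^* S^* = \mathcal D I^*/(\xi_1\xi_2)$ from the previous step together with $S^* = \mathcal D/\mathcal N$. Solving for $I^*$ gives $I^* = \frac{\xi_1\xi_2(\Lambda\mathcal N - \mu\mathcal D)}{\mathcal D\,\mathcal N}$, and substituting back into $C^* = \frac{\phi}{\xi_2} I^*$ and $A^* = \frac{\rho}{\xi_1} I^*$ yields the stated formulas for $C^*$ and $A^*$. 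Existence of a unique \emph{endemic} (i.e. componentwise positive) equilibrium then reduces to showing $\Lambda\mathcal N - \mu\mathcal D > 0$ precisely when $R_0 > 1$: since $R_0 = S^0\mathcal N/\mathcal D = \Lambda\mathcal N/(\mu\mathcal D)$ and $\mathcal D > 0$ (being a sum of products of nonnegative parameters with $\mu, \xi_1, \xi_2 > 0$), we have $\Lambda\mathcal N - \mu\mathcal D = \mu\mathcal D(R_0 - 1)$, which is positive iff $R_0 > 1$. The positivity of $S^* = \mathcal D/\mathcal N$ is automatic as $\mathcal N > 0$, and uniqueness is clear because the back-substitution produced the coordinates with no free choices once the branch $I^* \neq 0$ is selected.

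The computations are essentially routine; the one place demanding care is the algebraic identity $-\xi_1\xi_2\xi_3 + \xi_1\omega\phi + \alpha\xi_2\rho = -\mathcal D$ linking the coefficient that appears after back-substitution to the denominator $\mathcal D$ in the definition of $R_0$ — but this is exactly the identity invoked in the proof of Theorem~\ref{theo:global:stab:dfe}, so it may be quoted rather than re-derived. The only genuinely conceptual point, rather than a computational one, is the equivalence $\Lambda\mathcal N - \mu\mathcal D > 0 \iff R_0 > 1$, which I would make explicit via the relation $\Lambda\mathcal N - \mu\mathcal D = \mu\mathcal D(R_0 - 1)$ and the sign of $\mathcal D$; I expect this to be the main (though still minor) obstacle, since everything else is forced by the structure of the system.
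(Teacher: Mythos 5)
Your back-substitution derivation is correct and complete, and it is precisely the routine computation the paper has in mind when it asserts the result is ``easy to show'' (the paper gives no explicit proof of this lemma). In particular, your identification $\xi_1\xi_2\xi_3-\xi_1\omega\phi-\alpha\xi_2\rho=\mathcal{D}$ and the equivalence $\Lambda\mathcal{N}-\mu\mathcal{D}=\mu\mathcal{D}(R_0-1)>0\iff R_0>1$ are exactly the points that make the stated formulas and the threshold condition work.
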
 

We now prove the global stability of the endemic equilibrium $\Sigma_+$. 

\begin{theorem}
\label{theo:globstab:ee}
The endemic equilibrium $\Sigma_+$ of model \eqref{eq:model:1} 
is globally asymptotically stable for $R_0 > 1$.
\end{theorem}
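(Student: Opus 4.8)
The plan is to construct a Lyapunov function of Goh--Volterra (logarithmic) type centered at the endemic equilibrium $\Sigma_+$. Specifically, I would take
\begin{equation*}
W = c_1\left(S - S^* - S^* \ln\frac{S}{S^*}\right) + c_2\left(I - I^* - I^* \ln\frac{I}{I^*}\right) + c_3\left(C - C^* - C^* \ln\frac{C}{C^*}\right) + c_4\left(A - A^* - A^* \ln\frac{A}{A^*}\right),
\end{equation*}
with positive constants $c_1,\dots,c_4$ to be determined. Each summand is nonnegative by the elementary inequality $x - 1 - \ln x \geq 0$, and $W$ vanishes only at $\Sigma_+$, so $W$ is a legitimate candidate on the interior of $\Omega$. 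The first step is to compute $\dot W$ along \eqref{eq:model:1}, using the equilibrium relations $\Lambda = \beta(I^* + \eta_C C^* + \eta_A A^*)S^* + \mu S^*$, $\xi_3 I^* = \beta(I^* + \eta_C C^* + \eta_A A^*)S^* + \omega C^* + \alpha A^*$, $\phi I^* = \xi_2 C^*$, and $\rho I^* = \xi_1 A^*$ to eliminate the parameters $\Lambda$, and rewrite every term so that ratios like $S/S^*$, $I/I^*$, etc., appear.

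The core of the argument is the algebra: after substitution, $\dot W$ should collapse into $-\mu c_1 (S-S^*)^2/S$ (from the susceptible equation, which always contributes a negative-definite term) plus a collection of terms of the form (constant) times expressions such as $2 - \frac{S^*}{S} - \frac{S}{S^*}$, $3 - \frac{S^*}{S} - \frac{SI^*}{S^*I}\cdot(\text{stuff}) - \cdots$, i.e.\ differences between an arithmetic-type mean and a geometric-type mean along cycles in the transfer diagram $S \to I \to C \to I$ and $S \to I \to A \to I$. I would choose $c_1 = c_2$ so that the incidence terms pair up cleanly, then choose $c_3$ and $c_4$ proportional to $\omega C^*$-type and $\alpha A^*$-type coefficients respectively so that the flux between $I$ and $C$, and between $I$ and $A$, is balanced. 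The standard trick is that each such bracket, say $k - \sum x_i$ with $\prod x_i = 1$ and $x_i > 0$, is $\leq 0$ by AM--GM, with equality iff all $x_i = 1$, which forces $S=S^*$, $I=I^*$, $C=C^*$, $A=A^*$.

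The main obstacle I anticipate is the presence of the ``back-flow'' terms $\omega C$ in the $\dot I$ equation and $\alpha A$ in the $\dot I$ equation: unlike a simple cascade $S\to I\to C\to A$, here $C$ and $A$ feed back into $I$, so the directed graph has cycles of length two ($I \rightleftarrows C$, $I \rightleftarrows A$) rather than being a tree. This makes it non-obvious that a single choice of the $c_i$ simultaneously diagonalizes all the cross terms into valid AM--GM groupings; one typically needs the matrix of coefficients multiplying the logarithmic terms to satisfy a balance condition (a weighted-graph Kirchhoff-type identity), and verifying that the weights $c_i$ coming from the equilibrium equations indeed produce only nonpositive brackets is the delicate point. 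If a direct grouping fails, the fallback is to invoke a matrix-theoretic criterion (e.g.\ the approach of Guo--Li--Shuai via the graph-theoretic construction of Lyapunov functions, or Shuai--van den Driessche) to guarantee the existence of suitable $c_i$; but since the system is only four-dimensional, I expect an explicit choice to work out.

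Once $\dot W \leq 0$ on $\Omega$ with equality only at $\Sigma_+$, the proof concludes exactly as for Theorem~\ref{theo:global:stab:dfe}: the largest invariant set contained in $\{\dot W = 0\}$ is $\{\Sigma_+\}$ (substituting $S=S^*$, $I=I^*$, $C=C^*$, $A=A^*$ into \eqref{eq:model:1} is consistent and forces the trajectory to stay there), so LaSalle's invariance principle \cite{LaSalle1976} gives global attractivity of $\Sigma_+$ in the interior of $\Omega$, and local stability together with attractivity yields global asymptotic stability for $R_0 > 1$. A minor point to address is that the logarithmic Lyapunov function is defined only on the interior of $\Omega$, so I would note that boundary trajectories (where some component is zero) either immediately enter the interior or are handled by the structure of \eqref{eq:model:1}, which keeps $I,C,A$ positive once any infection is present.
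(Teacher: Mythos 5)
Your proposal takes essentially the same route as the paper's own proof: the authors use exactly this Goh--Volterra logarithmic function with the weights $c_1=c_2=1$, $c_3=\omega/\xi_2$, $c_4=\alpha/\xi_1$, substitute the same equilibrium relations, reduce $\dot V$ to brackets that are nonpositive by the arithmetic--geometric mean inequality, and conclude with LaSalle's invariance principle. The only difference is that the paper fixes this specific choice of weights at the outset and carries the algebra through explicitly, whereas you leave the determination of the $c_i$ and the verification of the AM--GM grouping as a step you expect to work out.
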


\begin{proof}
We start by defining the region 
$\Omega_0 = \left\{ (S, I, C, A) \in \Omega \, | \,  I = C = A = 0 \right\}$.
Consider the following Lyapunov function:
\begin{equation*}
V = \left(S - S^* \ln(S)\right) + \left(I - I^* \ln(I)\right) 
+ \frac{\omega}{\xi_2} \left(C - C^* \ln(C)\right) 
+ \frac{\alpha}{\xi_1} \left(A - A^* \ln(A)\right).
\end{equation*}
Differentiating $V$ with respect to time gives
\begin{equation*}
\dot{V} = \left(1-\frac{S^*}{S}\right)\dot{S} + \left(1-\frac{I^*}{I}\right)\dot{I} 
+ \frac{\omega}{\xi_2} \left(1-\frac{C^*}{C}\right)\dot{C} + \frac{\alpha}{\xi_1} 
\left(1-\frac{A^*}{A}\right)\dot{A}.
\end{equation*}
Substituting the expressions for the derivatives in $\dot{V}$, 
it follows from \eqref{eq:model:1} that
\begin{equation}
\label{eq:difV:1}
\begin{split}
\dot{V} &= \left(1-\frac{S^*}{S}\right)\left[ 
\Lambda - \beta \left( I + \eta_C \, C  + \eta_A  A \right) S - \mu S \right]\\
&\quad + \left(1-\frac{I^*}{I}\right)\left[  \beta \left( I + \eta_C \, C  
+ \eta_A  A \right) S - \xi_3 I + \alpha A + \omega C \right]\\
&\quad + \frac{\omega}{\xi_2}  \left(1-\frac{C^*}{C}\right)\left[\phi I-\xi_2 C \right]
+ \frac{\alpha}{\xi_1} \left(1-\frac{A^*}{A}\right)\left[  \rho I - \xi_1 A \right].
\end{split}
\end{equation}
Using the relation $\Lambda = \beta \left( I^* + \eta_C \, C^*  
+ \eta_A  A^* \right) S^* + \mu S^*$, we have from the first 
equation of system \eqref{eq:model:1} at steady-state 
that \eqref{eq:difV:1} can be written as
\begin{equation*}
\begin{split}
\dot{V} &= \left(1-\frac{S^*}{S}\right)\left[ 
\beta \left( I^* + \eta_C \, C^*  + \eta_A  A^* \right) S^* 
+ \mu S^* - \beta \left( I + \eta_C \, C  + \eta_A  A \right) S - \mu S \right]\\
&\quad + \left(1-\frac{I^*}{I}\right)\left[  \beta \left( I + \eta_C \, C  
+ \eta_A  A \right) S - \xi_3 I + \alpha A + \omega C \right]\\
&\quad + \frac{\omega}{\xi_2}  \left(1-\frac{C^*}{C}\right)\left[  \phi I -  \xi_2 C \right] 
+ \frac{\alpha}{\xi_1} \left(1-\frac{A^*}{A}\right)\left[  \rho I - \xi_1 A \right],
\end{split}
\end{equation*}
which can then be simplified to
\begin{equation*}
\begin{split}
\dot{V} &= \left(1-\frac{S^*}{S}\right) \beta I^* S^* 
+ \mu S^* \left( 2 - \frac{S}{S^*} - \frac{S^*}{S}\right) - \beta I S + \beta I S^*\\
&\quad + \beta( \eta_C C^* + \eta_A A^*) S^* - \beta (\eta_C C + \eta_A A) S 
- \frac{S^*}{S} \beta (\eta_C C^* + \eta_A A^*) S^* \\
&\quad + S^* \beta (\eta_C C + \eta_A A)
+ \left(1-\frac{I^*}{I}\right)\left[  \beta \left( I + \eta_C \, C  + \eta_A  A \right) 
S - \xi_3 I + \alpha A + \omega C \right]\\
&\quad + \frac{\omega}{\xi_2}  \left(1-\frac{C^*}{C}\right)\left[\phi I - \xi_2 C \right] 
+ \frac{\alpha}{\xi_1} \left(1-\frac{A^*}{A}\right)\left[  \rho I - \xi_1 A \right].
\end{split}
\end{equation*}
Using the relations at the steady state
\begin{equation*}
\xi_3 I^* = \beta (I^* + \eta_C C^* + \eta_A A^*) S^*  + \alpha A^*  + \omega C^*, 
\quad \xi_2 C^* = \phi I^*, 
\quad \xi_1 A^* = \rho I^*,
\end{equation*} 
and after some simplifications, we have
\begin{equation*}
\begin{split}
\dot{V} &= \left( \beta I^* S^* + \mu S^* \right) 
\left(2 - \frac{S}{S^*} -\frac{S^*}{S} \right) 
+ \beta S^*\left( \eta_C C^* + \eta_A A^* \right) 
\left( 2 - \frac{S^*}{S} - \frac{I}{I^*} \right) \\
&\quad + \beta S^* \left( \eta_C C + \eta_A A \right) 
\left( 1 - \frac{I^*}{I} \frac{S}{S^*}\right)     
+ \alpha A^* \left( 1 - \frac{A}{A^*} \frac{I^*}{I} \right) 
+ \omega C^* \left( 1 - \frac{C}{C^*} \frac{I^*}{I} \right)\\
&\quad +  \frac{\omega \phi}{\xi_2} I^* \left( 1 - \frac{I}{I^*} \frac{C^*}{C} \right)
+ \frac{\alpha \rho}{\xi_1} I^* \left( 1 - \frac{I}{I^*} \frac{A^*}{A} \right).
\end{split}
\end{equation*}
Because the geometric mean is less or equal than the arithmetic mean,
it follows that the terms between the larger brackets are less  
or equal than zero and $\dot{V} = 0$ holds if and only if $(S, I, C, A)$ 
take the equilibrium values $(S^*, I^*, C^*, A^*)$. Thus, 
by LaSalle's invariance principle, the endemic equilibrium 
$\Sigma_+$ is globally asymptotically stable. 
\end{proof}
  

\section{Numerical simulations}
\label{sec:numsimu}

In this section, we provide some numerical simulations 
that illustrate the analytic results proved in Sections~\ref{sec:dfe}
and \ref{sec:ee}. Consider the parameter values $\mu = 1/70$, 
$\Lambda = 2$, $\beta = 0.001$, $\eta_C = 0.04$, $\eta_A = 1.3$, 
$\omega = 0.09$, $\rho = 0.1$, $\phi = 1$, $\alpha = 0.33$ and $d = 1$. 
The corresponding basic reproduction number is equal to $R_0 = 0.9141$. 
The disease free equilibrium is given by 
$\left(S^0, I^0, C^0, A^0 \right) = (140, 0, 0, 0)$. 
Figure~\ref{fig:glob:stab:dfe} illustrates the stability of the disease 
free equilibrium proved in Theorem~\ref{theo:global:stab:dfe}. 
\begin{figure}[!htb]
\centering	
\subfloat[\footnotesize{$(S, I)$}]{\label{SI:globstab:dfe}
\includegraphics[width=0.45\textwidth]{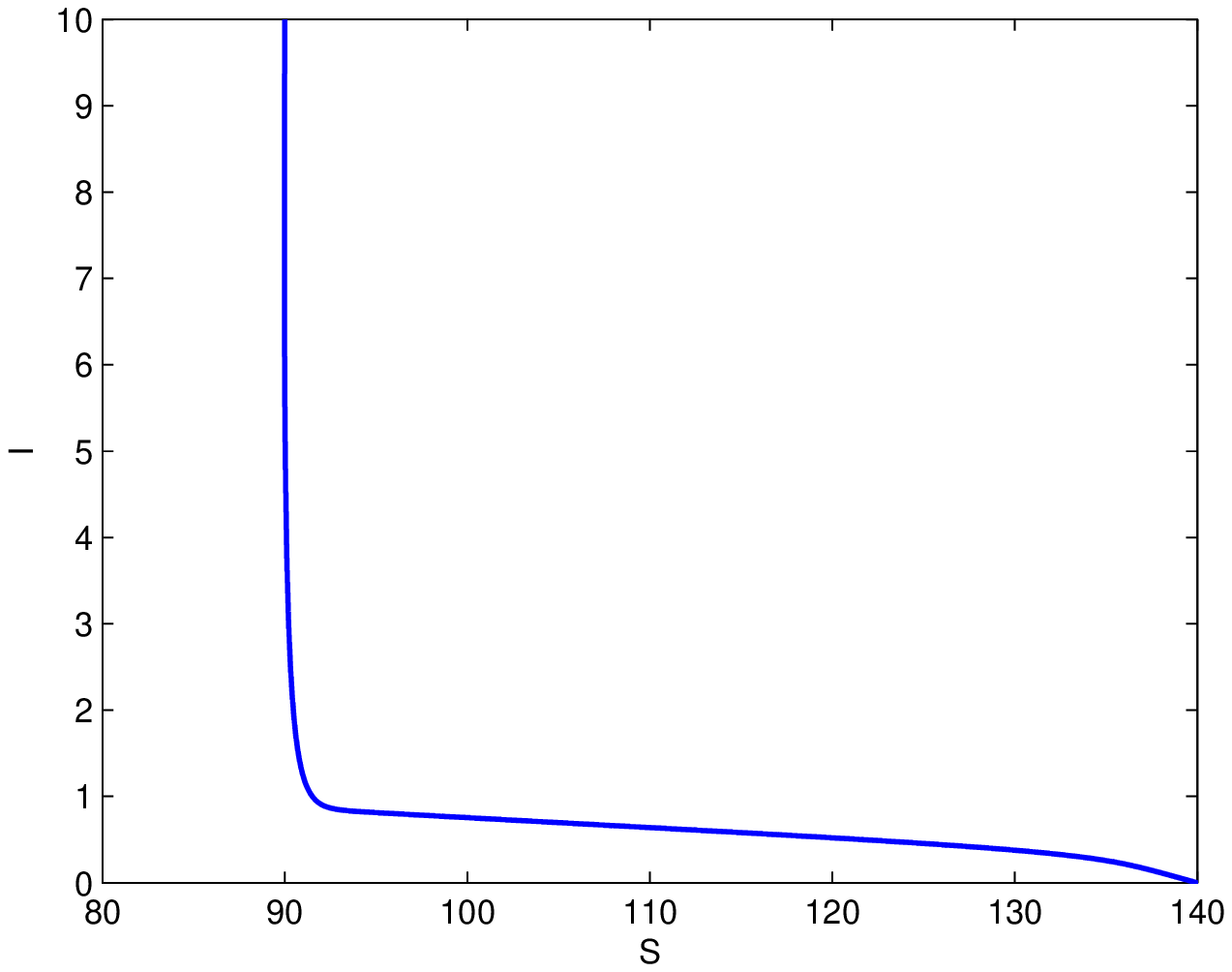}}
\subfloat[\footnotesize{$(C, A)$}]{\label{CA:globstab:dfe}
\includegraphics[width=0.45\textwidth]{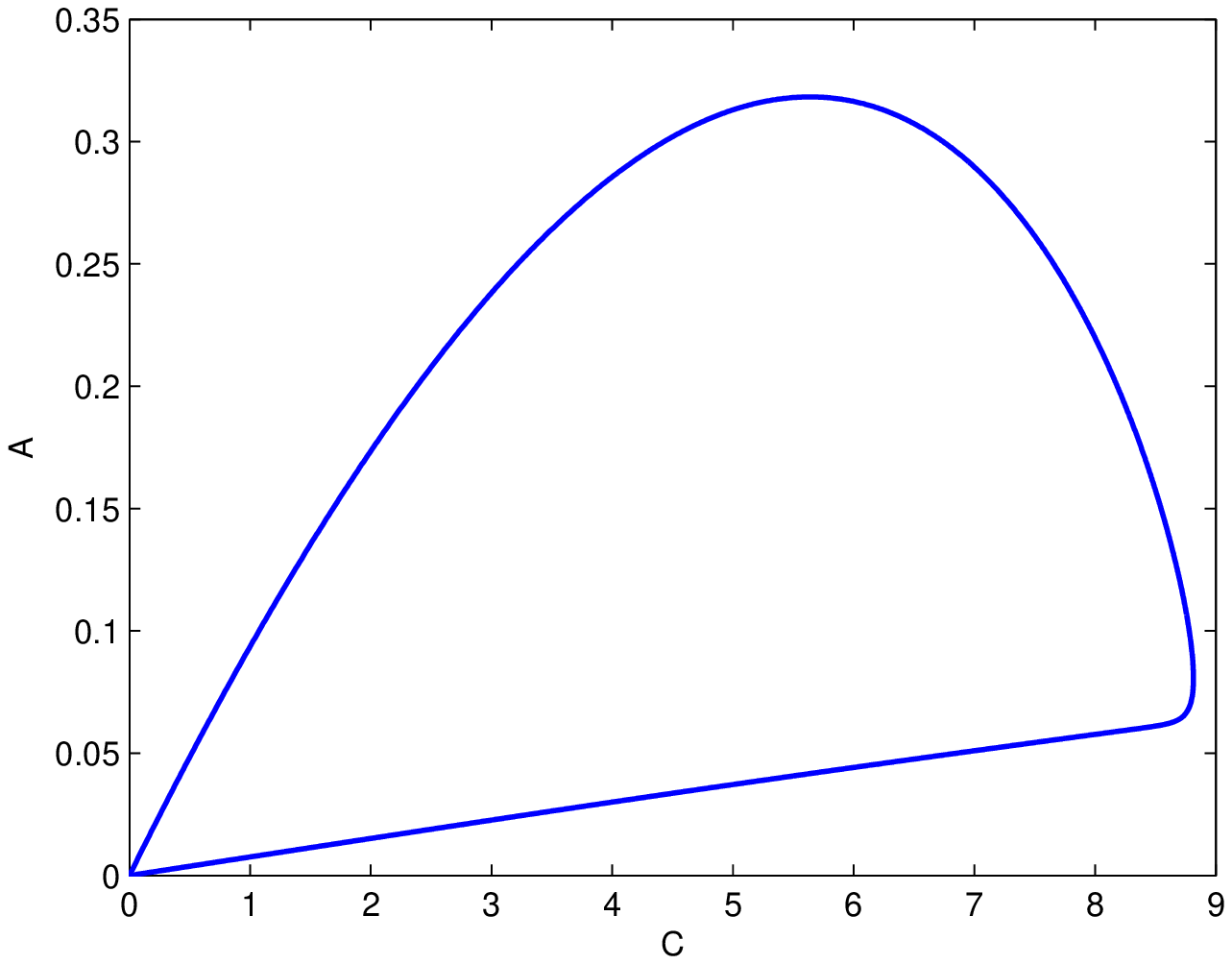}}
\caption{Global stability of the disease free equilibrium \eqref{eq:DFE} 
for $\mu = 1/70$, $\Lambda = 2$, $\beta = 0.001$, $\eta_C = 0.04$, 
$\eta_A = 1.3$, $\omega = 0.09$, $\rho = 0.1$, 
$\phi = 1$, $\alpha = 0.33$ and $d = 1$.}
\label{fig:glob:stab:dfe}
\end{figure}
In Figure~\ref{fig:glob:stab:ee}, we can observe the stability 
of the endemic equilibrium proved in Theorem~\ref{theo:globstab:ee} 
for the paremeter values $\mu = 1/70$, $\Lambda = 2$, $\beta = 0.002$,    
$\eta_C = 0.04$, $\eta_A = 1.3$, $\omega = 0.09$, $\rho = 0.1$, 
$\phi = 1$, $\alpha = 0.33$ and $d = 1$, which corresponds to a basic
reproduction number equal to $R_0 = 1.8281$ and where the unique 
endemic equilibrium is given by $\Sigma_+ =(S^*, I^*, C^*, A^*) 
= (76.5820, 3.9959, 38.3171, 0.2973)$. 
\begin{figure}[!htb]
\centering	
\subfloat[\footnotesize{$(S, I)$}]{\label{S:globstab:ee}
\includegraphics[width=0.45\textwidth]{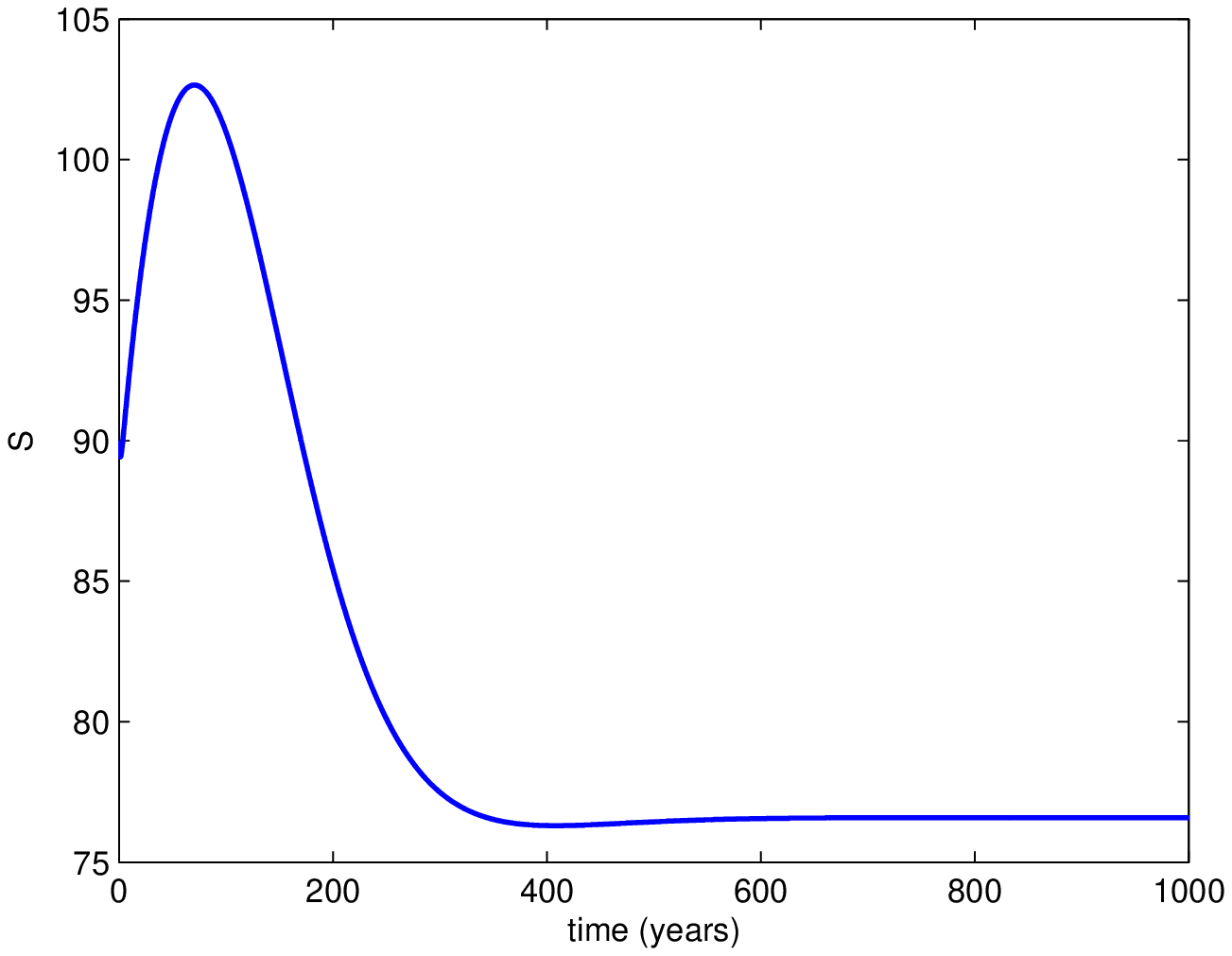}}
\subfloat[\footnotesize{$(C, A)$}]{\label{CA:globstab:ee}
\includegraphics[width=0.45\textwidth]{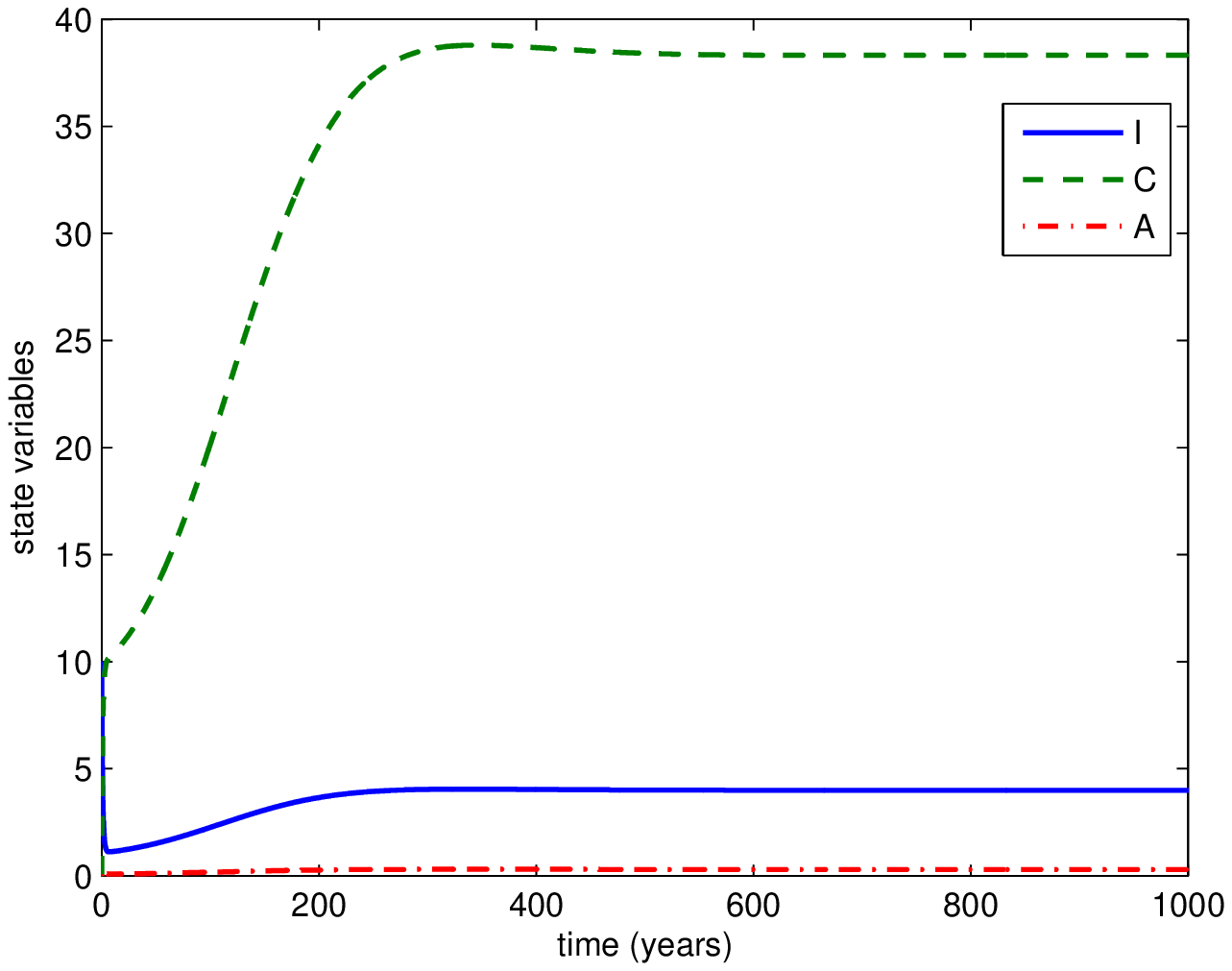}}
\caption{Global stability of the endemic equilibrium of
Lemma~\ref{lem:uni:ee} for $\mu = 1/70$, $\Lambda = 2$, 
$\beta = 0.002$, $\eta_C = 0.04$, $\eta_A = 1.3$, $\omega = 0.09$, 
$\rho = 0.1$, $\phi = 1$, $\alpha = 0.33$ and $d = 1$.}
\label{fig:glob:stab:ee}
\end{figure}


\section{Conclusion}
\label{sec:conc}

We proposed a mathematical model for HIV/AIDS transmission 
with variable total population size and different transmission rates 
depending on the viral load of HIV infected individuals. We proved 
existence of a disease free equilibrium and computed 
the basic reproduction number $R_0$ using the method 
in \cite{van:den:Driessche:2002}. Existence of 
an endemic equilibrium is proved for $R_0 > 1$. We also proved 
the global stability of the disease free equilibrium when $R_0 < 1$ 
and the global stability of the endemic equilibrium for $R_0 > 1$. 
The proofs of global stability are carried out through Lyapunov's 
direct method combined with LaSalle's invariance principle. 
The numerical simulations provided in Section~\ref{sec:numsimu} 
illustrate the obtained stability results. 


\section*{Acknowledgments}

This research was partially supported by FCT
(The Portuguese Foundation for Science and Technology)
within projects UID/MAT/04106/2013, CIDMA, 
and PTDC/EEI-AUT/2933/2014, TOCCATA, 
funded by FEDER funds through COMPETE 2020 -- 
Programa Operacional Competitividade e
Internacionaliza\c{c}\~ao (POCI) 
and by national funds through FCT.
Silva is also grateful to the FCT post-doc 
fellowship SFRH/BPD/72061/2010.



\end{document}